\DeclareMathOperator*{\argmin}{argmin}
\DeclareMathOperator*{\argmax}{argmax}
\DeclareMathOperator*{\EX}{\mathbb{E}}
\DeclareMathOperator*{\Bigcdot}{\scalerel*{\cdot}{\bigodot}}
\newcommand{\norm}[1]{\left\lVert#1\right\rVert}
\newcommand{\CI}[2]{#1\textcolor{gray}{\scaleto{\pm#2}{5pt}}}
\newcommand{\dblacc}[2]{\mathpalette\dblacc@{{#1}{#2}}}
\newcommand{\dblacc@}[2]{\dblacc@@#1#2}
\newcommand{\dblacc@@}[3]{%
  \begingroup
  \sbox\z@{$\m@th#1#3$}%
  #2{\box\z@}%
  \endgroup
}
\newcommand\inlineeqno{\stepcounter{equation}\hfill (\theequation)}
\newcommand{\uvec}[1]{\boldsymbol{\hat{\textbf{#1}}}}
\newcommand{\pseudoeqref}[1]{\textup{\tagform@{#1}}}
\newtheorem*{theorem*}{Theorem}
\title{C-SL: Contrastive Sound Localization with Inertial-Acoustic Sensors}
\author{%
    Majid Mirbagheri \\
    University of Washington \\
    \texttt{mbagheri@uw.edu} \\
 \And
 Bardia Doosti \\
 Indiana University Bloomington \\
 \texttt{bdoosti@indiana.edu} \\
}
\begin{document}

    \maketitle

    \begin{abstract}
        Human brain employs perceptual information about the head and eye movements to update the spatial relationship between the
        individual and the surrounding environment.
        Based on this cognitive process known as spatial updating, we introduce contrastive sound localization (C-SL) with mobile
        inertial-acoustic sensor arrays of arbitrary geometry.
        C-SL uses unlabeled multi-channel audio recordings and inertial measurement unit (IMU) readings collected
        during free rotational movements of the array to learn mappings from acoustical measurements to an array-centered
        direction-of-arrival (DOA) in a self-supervised manner.
        Contrary to conventional DOA estimation methods that require the knowledge of either the array geometry or source locations in the
        calibration stage, C-SL is agnostic to both, and can be trained on data collected in minimally constrained settings.
        To achieve this capability, our proposed method utilizes a customized contrastive loss measuring the spatial contrast between
        source locations predicted for disjoint segments of the input to jointly update estimated DOAs and the acoustic-spatial mapping
        in linear time.
        We provide quantitative and qualitative evaluations of C-SL comparing its performance with baseline DOA estimation methods in a
        wide range of conditions.
        We believe the relaxed calibration process offered by C-SL paves the way toward truly personalized augmented hearing applications.

    \end{abstract}

    \section{Introduction}\label{sec:intro}

    Humans localize sounds by comparing inputs across the two ears, resulting in a head-centered representation of
    sound-source location~\cite{blauert1997spatial}.
    When the head moves, brain combines inertial information about head movement with the head-centered estimate to correctly update
    the world-centered sound-source location in a cognitive process known as auditory spatial
    updating~\cite{wallach1940role,genzel2016dependence}.
    Existing methods for sound localization with microphone arrays differ from human auditory system in two major aspects:
    \begin{enumerate*}[label=(\roman*)]
        \item unlike humans who adapt to changes in auditory localization cues without supervision~\cite{shinn1998adapting}, in
        order to operate, these algorithms rely on either specific array geometries or access to sample sounds with known spatial information.
        \item these methods do not account for array movements as they are mostly designed for static applications.
    \end{enumerate*}
    With the advent of augmented reality (AR) technologies embodied in mobile devices such as smart glasses and headphones, addressing these
    gaps can extend versatility of these algorithms to more applications in this domain.

    Calibration of the arrays in conventional source localization methods involves measuring array responses to signals coming from known
    directions when they cannot be analytically determined as a function of the array geometry.
    Once the array is calibrated, these methods use stored responses as some form of lookup table.
    Popular methods in this category consist of those based on steered response power (SRP)~\cite{dibiase2000high} and subspace approaches
    such as multiple signal classification (MUSIC)~\cite{schmidt1986multiple}.
    The grid search involved in these methods is, however, usually of considerable computational cost, while the performance is restricted
    by the grid resolution.

    In an effort to overcome these issues, more recently supervised learning algorithms using deep neural networks (DNN) have gained
    significant attention in the field~\cite{roden2015sound, xiao2015learning,chakrabarty2017broadband,wang2018robust,
    adavanne2018direction,comminiello2019quaternion}.
    Given acoustic measurements with known spatial labels, in the form of a single direction or spherical intensity field representations,
    DNN-based methods solve a nonlinear regression problem to predict labels from measurements via an iterative gradient-based
    optimization algorithm.
    A common problem of learning-based methods is their sensitivity to mismatches between distributions of data used for training and test.
    This issues can be specifically more severe for mobile arrays with microphones that are fit in the ear or installed on head-mounted
    or hand-held devices.
    The directional pattern of such arrays depends on not only relative positioning of the microphones but also the unique anatomical
    geometry of the users' head/ear, the device fit, or how the device is handled by the user.
    On one hand, augmenting training sets with all such variabilities is in general an infeasible task, and this eliminates the
    possibility of calibrating the array prior to deployment.
    On the other hand, collection of acoustic data with clean spatial labels cannot currently take place on a per user basis as it requires
    elaborate lab setups or computationally-expensive simulations.

    Contrastive learning is an emerging paradigm proposed to overcome data limitations of supervised methods through self-supervision
    namely automatic labeling of data by comparing different views of it across time, space, or sensor
    modalities~\cite{oord2018representation,chen2020simple,tian2019contrastive}.
    This paradigm has been successfully used for visual object detection~\cite{sermanet2018time,pirk2019online}, and audio-visual source
    localization~\cite{zhao2018sound,owens2018audio,gan2019self,liu2019self}.
    Studies in neuroscience suggest that human brain utilizes predictive coding, a special form of self-supervision, to encode sound
    attributes~\cite{kumar2011predictive}.
    Spatial updating process in brain also by nature uses a contrastive measure based on spatial displacement of the head to update
    head-centered sound source location as the individual moves~\cite{wallach1940role,genzel2016dependence}.
    Inertial information involved in calculation of head attitude and motion is provided by vestibular organs in the inner ear.
    While this process has been mostly investigated in the context of localization inference, a contrastive learning framework for sound
    localization based on spatial updating that imitates plasticity in spatial auditory processing is yet to be desired.
    Such a framework will bridge the gaps between traditional DOA estimation methods and human spatial auditory processing.
    In applications, inertial information has been previously utilized to increase robustness of visual
    odometry~\cite{almalioglu2019selfvio,qin2017robust} and simultaneous localization and mapping (SLAM) systems~\cite{mur2017visual}.

    \textbf{Contributions} In this paper, we propose to the best of our knowledge the first contrastive learning framework for sound
    localization with inertial-acoustic sensors based on cognitive process of auditory spatial updating.
    Our algorithm, named \textit{C-SL}, is able to localize both narrowband and wideband sources, and in contrast to existing DNN and grid
    search-based methods, is agnostic to the array geometry and the knowledge of source locations in the calibration process.
    The only assumption we make is that during training there is only one far-field source present, and that the location of this source is
    approximately piece-wise constant in a reference coordinate frame which we refer to as \textit{world-frame} in the rest of the paper.
    To train our model, we use a customized loss that leverages this assumption and minimizes spatial contrast between predictions for
    consecutive segments of input in the calibration stage.
    In the next section, we describe the data model followed by how the contrastive loss is computed, and the model architecture.

    \section{Self-Supervised Learning with Sub-Contrastive Loss}\label{sec:ss-learning}

    \subsection{Data Model}\label{subsec:data-model}
    Assuming a single far-field sound wave impinging on a microphone array, the output of the microphone with the index
    $m\in\{1, \dots, M\}$ is given in the short-time Fourier transform (STFT) domain by:

    \begin{equation}
        \label{eq:eq1}
        Y^m_{k,n}=H^m_k(r^s_n)S_{k,n}+V^m_{k,n}
    \end{equation}

    where $S_{k,n}$ is the source signal, $H^m_k(r^s_n)$ is the acoustic transfer functions (ATF) of the source at location $r^s_n$ with
    respect to $m$-th microphone, $V^m_{k,n}$ models noise and reverberation, and $k$ and $n$ are the frequency and time frame indices,
    respectively.
    Since we are interested about far-field localization, we denote the source location as a 3-D vector on the unit sphere,
    $r^s\in \mathbb{S}^2$.
    With this definition, the locations will be the same for all microphones, hence referred to as \textit{sensor-frame} direction.
    Throughout the paper, bold symbols represent $M$-dimensional vectorized version of quantities related to the microphone array,
    $\langle \cdot{,}\cdot \rangle$ is the inner product, $\norm{x}$ denotes $\ell_2$ norm of a vector,
    and $\hat{x}=\frac{x}{\norm{x}}$ for all vectors $x\ne0$.

    During train data collection the array is rotated in all directions to densely sample acoustic measurements along arbitrary
    trajectories on $\mathbb{S}^2$.
    With a 9-DOF inertial measurement unit (IMU) attached to the array, orientations of the array with respect to the earth (world) frame,
    represented by quaternions or Euler angles, can be calculated from raw IMU readings~\cite{madgwick2010efficient}.
    Given the correspondence between orientations and rotation matrices in 3-D space~\cite{kuipers1999quaternions}, we assume that for
    any given time frame we know the corresponding rotation matrix $R_n\in\mbox{SO}(3)$ with which we can transform any direction in
    the sensor coordinate to the world frame coordinate by:

    \begin{equation}
        \label{eq:trans}
        r^w_{n} = R_{n} r^s_{n}
    \end{equation}

    \textbf{Spatial Constancy:} Considering that $r^w$ changes at a slow rate (in contrast to $r^s$), we assume it to be (approximately)
    constant over time intervals, denoted by $\{I_i\}_{i=1}^{N_i}$, with
    $I_{i}=\{n\}_{n_i\leq n<n_{i+1}}$, $1=n_1<n_2<\dots<n_{N_i+1}=N_s$, and $N_s$, $N_i$ representing the total number of samples and
    intervals.
    For the sake of generality, we do not assume any special relationship between source locations across different intervals.

    \subsection{Sub-Contrastive Loss}\label{subsec:gc-loss}

    Given the observations $\{(\bm{Y}_{k,n},R_n)\}$, and $\{I_i\}$, we seek an acoustic-spatial function
    $f_\theta \colon \mathbb{C}^M\times\left(0,1\right]\rightarrow \mathbb{S}^2$, parameterized by $\theta$, that maps $M$-dimensional
    complex-valued acoustic measurements at each time-frequency bin, $\bm{Y}_{k,n}$, and their associated normalized frequency,
    $\tilde{k}=k/k_{\max}$, to a single direction in the sensor frame.

    We can find the optimum values of $\theta$ in a self-supervised manner by leveraging the spatial constancy assumption i.e.\ maximizing
    pairwise similarity between world-frame directions predicted for all the time-frequency bins within intervals $\{I_i\}$ with a
    contrastive loss expressed by:

    \begin{align}
        \label{eq:l_cont}
        \mathcal{L}_\mathrm{cont}(\theta)=\sum\limits_{i=1}^{N_i} \
        \sum\limits_{\substack{m,n \in I_{i}\\k, k'}} \
        \norm{\tilde{r}^w_{k, m}-\tilde{r}^w_{k', n}}^2
    \end{align}

    where $\tilde{r}^w_{k, n}=R_{n}f_\theta (\bm{Y}_{k, n}, \tilde{k}). \inlineeqno$\label{eq:w_pred}
%

    This loss function uniquely determines the mapping $f$ up to an inversion of the sign, when the mapping is
    bijective and the loss takes its minimum value of 0 over all possible pairs of observations (for a proof see the Appendix).
    In a conventional contrastive learning framework, the loss function not only encourages outputs to be close for similar (positive)
    examples, but also forces them away from one another for distant (negative) ones.
    The loss defined in~\eqref{eq:l_cont} accomplishes this in a soft manner by constraining the spatial contrast between
    sensor-frame directions predicted for different pairs based on the measured change in the orientation of the of the array.

    Pairwise similarity imposes a fairly strong constraint on predicted directions causing the training to fail entirely when some of
    the bins are dominated by spurious directions caused by reverberation, or ambient noise.
    In order to manage such situations, we propose a simple cost function, termed \textit{sub-contrastive} loss, that
    enforces a weaker similarity constraint over the entire set of interval bins.
    Furthermore, to handle uncertainty, we expand the range of $f_\theta$ to $\mathbb{R}^3$ so that the mapping predicts an additional
    positive-valued weight, encoded in the norm of the output, representing the confidence of the predicted direction for the bin.

    Figure~\ref{fig:overview} shows how sub-contrastive loss is computed from world-frame directions predicted for different
    time-frequency bins.
    We first break each interval $I_i$ into two disjoint sub-intervals denoted by $\{(I_{i,1}, I_{i,2})\}$.
    Break points are chosen randomly so that the ratio of bins in the two segments satisfies
    $c_1\leq\frac{\lvert I_{i,1}\rvert}{\lvert I_{i,2}\rvert}\leq c_2$ with $c_1$ and $c_2$ arbitrarily set to 0.2 and 0.8.

    We then pool the world-frame directions predicted by~\pseudoeqref{4} over all time-frequency bins in each sub-interval by finding
    their centroid on $\mathbb{S}^2$ formulated as:

    \begin{equation}
        \label{eq:centroid}
        \bar{r}_{i,l}^w=\argmin_{d\in \mathbb{S}^2}\sum\limits_{n\in I_{i,l},k}\norm{d-\tilde{r}^w_{k, n}}^2
        =\dblacc\widehat{\left(\sum\limits_{n\in I_{i,l},k}\tilde{r}^w_{k, n}\right)},\quad l=1,2
    \end{equation}

    Note that by preserving the norms during sensor-to-world transformation, we favor the high-confidence predictions over the rest in
    the pooling stage.
    Finally, the sub-contrastive loss measures the distance between the two centroids computed for the sub-interval pairs as followed:

    \begin{equation}
        \label{eq:l_sub_cont}
        \mathcal{L}_{\text{sub-cont}}(\theta)=\sum\limits_{i=1}^{N_i} \
        \norm{\bar{r}_{i,1}^w-\bar{r}_{i,2}^w}^2
    \end{equation}

    A minimum value of $0$ for the sub-contrastive loss is only a necessary condition to meet the pairwise similarity constraint.
    Thus the this loss can be seen as a weaker form of the contrastive one.
    It should be noted that using random subsets of bins from intervals to compute the centroids results in same degenerate sensor predictions.
    The time-based segmentation described above avoids this situation by considering only subsets with likely most different array
    orientations and making sure their world frame centroids are matching.
    Randomizing break points across training epochs improves stochasticity and consequently generalizability of the model.

    Similar to contrastive one, the sub-contrastive loss is invariant to reflection of predicted sensor frame directions with respect to
    the origin.
    This sign ambiguity in the predictions can be easily resolved in many cases via a postprocessing stage in which $f_\theta$ is negated
    based on additional criteria.
    For instance, when simple knowledge about relative position of microphones such as
    ``mic A coordinate in the sensor frame has a larger value on the $x$-axis than mic B'' is available, a comparison of intensities
    or delays of sounds received at the two microphones can determine if the predicted directions should be reflected or not.
    Alternatively, orientation of the array at the beginning of data collection can be set with respect to the source in a way that a
    general condition such as $\langle r^s_{n=0},\uvec{i}\rangle > 0$ is enforced and later used to disambiguate the mapping.

    The centroids computed in~\eqref{eq:centroid} can be interpreted as \textit{denoised} approximation of predicted world-frame directions.
    At the time of training, we need two versions of this estimate for each interval to make contrastive learning possible.
    However at the time of inference, there is no such need, and centroids can be computed directly over all time-frequency bins within the
    each interval providing that they all belong to the same source.
    In multi-source conditions, world-frame predictions for different bins appear in clusters representing different sources
    to which they belong.
    As we will see in Section~\ref{sec:exp}, in such situations, we can run a clustering algorithm on these predictions in the same vein
    as~\cite{wu2018multisource}, and use associated cluster centers as denoised approximation of world-frame directions for each bin.
    Regardless of number of sources, denoised sensor-frame directions are computed by transforming the world-frame centroids back into
    the sensor frame for each time-frequency bin.

    In contrast to quadratic time of contrastive loss, computation of the sub-contrastive loss only takes linear time with respect
    to the number of time-frequency bins resulting in a very efficient training of the model by C-SL\@.

    \subsection{Model Architecture}\label{subsec:model-arch}

    We design $f_{\theta}$ as a multi-layer perceptron (MLP), agnostic to the underlying spatial configuration of the array (as opposed to a
    convnet, for example).
    The MLP we use consists of three hidden layers with 1024, 512, and 256 units.
    Each hidden layer is followed by a (parameter-free) weight normalization layer~\cite{salimans2016weight} and a standard ReLU non-linearity.
    The third hidden layer provides the input to a linear prediction layer of size three.

    It should be noted that while $f_\theta$ could be optimized separately for each frequency, we opt for a single mapping conditioned on
    frequency in the view of the fact that the array spectral profile is inherently low dimensional.

    \section{Experiments}\label{sec:exp}

    \begin{table}[t]
        \centering
        \begin{tabular}{cc}
            \hspace*{-5mm}
            \begin{minipage}{0.5\textwidth}
                \begin{algorithm}[H]
                    \caption{C-SL Training}
                    \label{alg:C-SL}
                    \begin{algorithmic}[1]
                        \STATE $\theta$ $\gets$ Initialize model parameters.
                        \WHILE{not converged}
                        \STATE $B\subset\{1, \dots, N_i\}$ $\gets$ random mini-batch of interval indices
                        \STATE $\bm{Y}_{k,n}$, $R_n$, $\gets$ data at intervals $\{I_i\}_{i\in B}$
                        \STATE $\{(I_{i,1},I_{i,2})\}\gets$ sub-intervals with random break points
                        \STATE $\tilde{r}^s_{k, n}\gets f_\theta (\bm{Y}_{k, n}, \tilde{k})$
                        \STATE $\tilde{r}^w_{k, n}\gets R_{n}\tilde{r}^s_{k, n}$
                        \STATE $\bar{r}_{i,l}^w\gets\dblacc\widehat{\left(\sum\limits_{n\in I_{i,l},k}\tilde{r}^w_{k, n}\right)}\quad i\in B,\,l=1,2$
                        \STATE $\mathcal{L}_{\text{sub-cont}}\gets\sum\limits_{i\in S} \
                            \norm{\bar{r}_{i,1}^w-\bar{r}_{i,2}^w}^2$
                        \STATE $\theta\gets\text{A}\textsc{dam}(\triangledown_\theta,\mathcal{L}_{\text{sub-cont}},\theta)$
                        \ENDWHILE
                        \IF{reflection condition satisfied}
                        \STATE $f_\theta\gets -f_\theta$
                        \ENDIF
                    \end{algorithmic}
                \end{algorithm}
            \end{minipage} &
            \begin{minipage}{0.5\textwidth}
                \centering
                \captionsetup{width=.9\linewidth}
                \includegraphics[width=\textwidth]{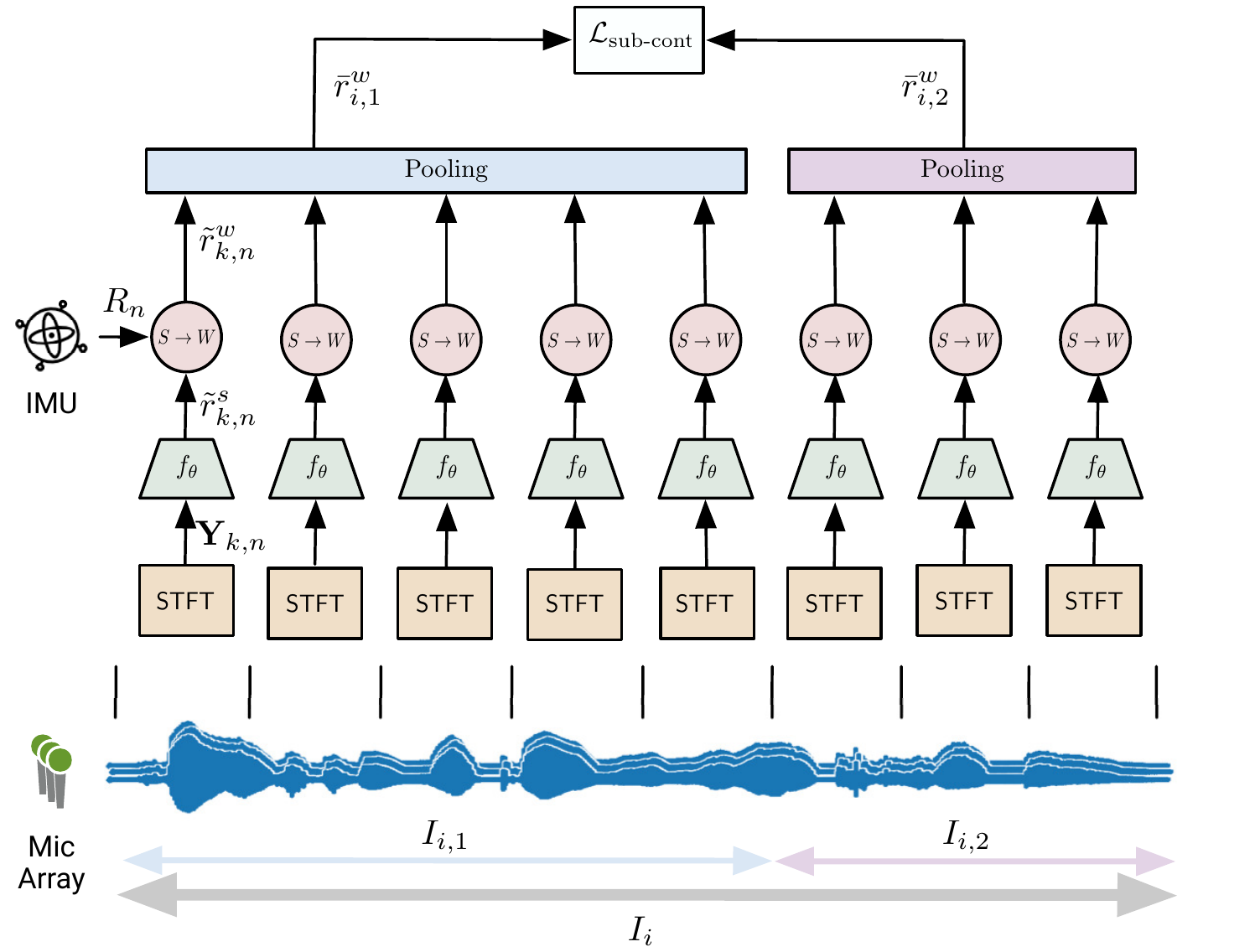}
                \captionof{figure}{Overview of the proposed framework for Contrastive Sound Localization (C-SL).
                Sensor-frame predictions are transformed into the world frame ($S\rightarrow W$) by the help of inertial information
                provided by the IMU.
                During training, sub-contrastive loss measures distance between world-frame predictions aggregated over
                time-frequency bins within sub-intervals of data. }
                \label{fig:overview}
            \end{minipage}
        \end{tabular}
    \end{table}

    \subsection{Dataset}\label{subsec:dataset}
    In order to evaluate the proposed C-SL, we synthesize a dataset consisting of hybrid acoustic-inertial data in the same vein as most
    DNN-based methods which need large amount of data~\cite{roden2015sound, xiao2015learning,chakrabarty2017broadband,adavanne2018direction}.
    Our dataset consists of multiple recording sessions.
    Each session simulates acoustic-inertial data from one interval described in section~\ref{subsec:data-model}.
    We simulate measurements in a \SI[parse-numbers=false]{4\times4\times4}{m} room with point sources randomly placed \SI{1}{m}
    away from the room center.
    Source locations remain consistent within a particular session, but vary from session to session.
    Without loss of generality, we use an array with cubical configuration with 8 omni-directional microphones positioned at
    the corners of a cube with an edge length of \SI{5}{cm} and center of mass always at the room center.
    For the array motion, we consider rotations at a constant angular velocity of magnitude
    \SI[parse-numbers=false]{\frac{\pi}{2}}{\radian/\second} and fixed but random axes for each session.
    Orientation of the array at the beginning of each session is set to a unit quaternion randomly chosen from $\mathbb{S}^3$.
    Translational motion was not considered in our experiments in order to avoid violating the far-filed assumption.
    We calculated room impulse responses in five different reverbrant room conditions, four with fixed reverberation times, and one with
    mixed reverberation times.
    In all conditions, we used a GPU-based geometrical acoustics simulator, gpuRIR~\cite{diaz2018gpurir} to model sound propagation and
    reverberation based on the rectangular room image-source model~\cite{allen1979image}.
    In the first four conditions, we set the value of broadband reverberation time of the room $T_{60}$ respectively to 0 (anechoic), 250,
    500, and \SI{750}{ms}.
    For the last condition, we randomly sampled $T_{60}$ values from the range $\SIrange[range-phrase=-]{0}{750}{ms}$ for
    each session.
    We refer to these five conditions as $C_{\text{anechoic}}$, $C_{250}$, $C_{500}$, $C_{750}$, and $C_{\text{mixed}}$.
    To generate one session with a single source in each of these conditions, one dry speech recording from the TIMIT corpus~\cite{garofolo1993darpa}
    with an average length of approximately \SI{3}{s} was convolved with the simulated room impulse responses.
    We split 6300 utterances in the corpus into three subsets of size 5012, 638, and 638 respectively for train, test and validation.
    All utterances in the three splits were then spatialized each resulting in one session in the corresponding split.
    All recordings, sampled at \SI{16}{k Hz}, were transformed into the STFT domain using frames of length \SI{25}{ms}, hop length
    \SI{10}{ms} and a hanning window.
    Rotation matrices associated with time frames were calculated based on angular velocity and initial orientation of the
    array chosen for the session and the timestamp of the frames.

    The input features to the network are $17\times1$ vectors computed for each time-frequency bin, consisting of real and imaginary parts
    of the array STFT coefficients concatenated and normalized to a unit-norm $16\times1$ vector and an additional feature
    representing normalized frequency of the bin.
    The normalization of acoustic features instructs the model to disregard content and distance-related variations of intensity across
    time and frequency.

    In order to save memory and also satisfy the far-filed assumption, during training from each session we only picked bins in the
    frequency range \SIrange{340}{8000}{Hz} whose original STFT magnitudes were greater than some ratio (arbitrarily set to -40 dB)
    of the maximum magnitude over all the bins with the same frequency.

    \subsection{Training and Metric}\label{subsec:train_and_metric}

    We optimized parameters of the model iteratively on selected mini-batches of simulated sessions from $C_\text{mixed}$ dataset.
    We used the Adam optimizer~\cite{kingma2014adam} with $\beta_1=0.9$ and $\beta_2=0.999$, a learning rate of
    \num[output-exponent-marker=\ensuremath{\mathrm{e}}]{1e-05}, and batch size of 8.
    All models were trained with 2 GPUs for 300 epochs.
    The procedural training details are summarized in Algorithm~\ref{alg:C-SL}.

    We used the angle (in degrees) between estimated sensor-frame directions and their ground truth values used in the simulation,
    formulated by $\sigma(\bar{r}^s,r^s)=180/\pi\cdot\cos^{-1}(\langle\bar{r}^s,r^s\rangle)$, as the DOA estimation
    error metric in our evaluations.
    In particular for C-SL, $\bar{r}^s$ refers to the final denoised sensor-frame estimates.

    \begin{table}[t]
        \caption{Comparison of DOA estimation errors (in degree) in single-source condition evaluated for different reverberations
        times, and window lengths $L_{\text{win}}$.}
        \label{table:comparison}
        \centering
        \begin{tabular}{@{}llcccc@{}}
            \toprule
            Method & $L_{\text{win}}(\si{s})$ & $C_{\text{Anechoic}}$ & $C_{250}$ & $C_{500}$ & $C_{750}$ \\
            \midrule
            \multirow{5}{2.5cm}{SRP-PHAT~\cite{dibiase2000high}} \
            & 0.05 & $\CI{1.17}{0.01}$ & $\CI{3.28}{0.12}$ & $\CI{11.35}{0.30}$ & $\CI{16.96}{0.39}$ \\
            & 0.2 & $\CI{1.16}{0.01}$ & $\CI{1.96}{0.06}$ & $\CI{4.16}{0.19}$ & $\CI{6.67}{0.35}$ \\
            & 0.5 & $\CI{1.27}{0.03}$ & $\CI{2.41}{0.11}$ & $\CI{4.59}{0.27}$ & $\CI{6.63}{0.43}$ \\
            & 1.0 & $\CI{2.30}{0.09}$ & $\CI{3.80}{0.16}$ & $\CI{7.15}{0.44}$ & $\CI{9.11}{0.60}$ \\
            & Full (\textasciitilde3) & $\CI{9.76}{0.64}$ & $\CI{10.78}{0.67}$ & $\CI{14.59}{0.99}$ & $\CI{16.04}{1.08}$ \\
            \midrule
            \multirow{5}{2.5cm}{LSDD~\cite{tourbabin2019lsdd}} \
            & 0.05 & $\CI{1.12}{0.01}$ & $\CI{5.22}{0.19}$ & $\CI{17.81}{0.39}$ & $\CI{24.98}{0.48}$ \\
            & 0.2 & $\CI{1.20}{0.01}$ & $\CI{2.12}{0.06}$ & $\CI{5.70}{0.28}$ & $\CI{10.14}{0.48}$ \\
            & 0.5 & $\CI{1.30}{0.03}$ & $\CI{2.32}{0.13}$ & $\CI{5.39}{0.37}$ & $\CI{8.08}{0.54}$ \\
            & 1.0 & $\CI{1.75}{0.04}$ & $\CI{2.84}{0.24}$ & $\CI{6.01}{0.50}$ & $\CI{8.95}{0.65}$ \\
            & Full (\textasciitilde3) & $\CI{6.76}{0.46}$ & $\CI{8.32}{0.64}$ & $\CI{12.79}{0.95}$ & $\CI{14.50}{0.98}$ \\
            \midrule
            \multirow{5}{2.5cm}{C-SL (proposed)} \
            & 0.05 & $\CI{1.56}{0.01}$ & $\CI{7.91}{0.16}$ & $\CI{18.63}{0.30}$ & $\CI{25.03}{0.38}$ \\
            & 0.2 & $\CI{1.42}{0.02}$ & $\CI{3.47}{0.07}$ & $\CI{7.61}{0.22}$ & $\CI{11.25}{0.37}$ \\
            & 0.5 & $\CI{1.25}{0.02}$ & $\CI{2.94}{0.08}$ & $\CI{5.97}{0.31}$ & $\CI{7.93}{0.42}$ \\
            & 1.0 & $\CI{1.17}{0.02}$ & $\CI{2.78}{0.12}$ & $\CI{5.47}{0.41}$ & $\CI{7.18}{0.55}$ \\
            & Full (\textasciitilde3) & $\CI{1.03}{0.03}$ & $\CI{2.29}{0.10}$ & $\CI{3.86}{0.21}$ & $\CI{4.66}{0.25}$ \\
            \bottomrule
        \end{tabular}
    \end{table}

    \subsection{Evaluation Results}\label{subsec:baseline}

    The distinguishing characteristics of C-SL is its self-supervised nature and the new applications made possible because of that, most
    notably when source locations are not available for array calibration.
    In order to demonstrate this capability, we run C-SL under a wide range of conditions and compared its performance with
    two baseline methods that leverage knowledge of array transfer functions: the well-studied SRP-PHAT~\cite{dibiase2000high}
    algorithm, and another approach, named LSDD with soft time-frequency masks~\cite{tourbabin2019lsdd}, recently proposed for
    highly-reverbrant environments.\footnote{
    Existing DNN-based methods could not be trained on our dataset since they required both source and array to be stationary for
    at least several seconds.}
    In summary, SRP-PHAT estimates the sensor directions by the maximum of the normalized cross-power spectral density (CPSD),
    steered in all possible directions $\{r_j\}_{j=1}^J$ i.e.\
    $\bar{r}^s_{n, \text{SRP}}=\argmax\limits_{r_j}\sum\limits_k\left|\sum\limits_{m=1}^M {A^m_k}^\ast(r_j)\frac{\Phi_{k,i}^{m,1}}{\rvert
        \Phi_{k,i}^{m,1}\rvert}\right|^2$
    where $A^m_k(r_j)=H^m_k(r_j)/H^1_k(r_j)$, and $\Phi^{m,1}_{k, n}=\EX(Y^m_k{Y^1_k}^\ast)$ is the cross-power spectral density
    between the $m$-th and first microphone signals estimated for a window $I_n$ centered at time index $n$.
    LSDD method directly uses similarity between array outputs and precomputed ATFs weighted by a mask measuring direct path dominance
    to estimate DOAs.
    In particular, it first computes a spatial spectrum for each bin, $\phi_{k,n}(r_j)=\arccos{\frac{\lvert\langle
        \bm{H}_k^H(r_j),\bm{Y}_{k,n}\rangle\rvert}{\norm{\bm{H}_k(r_j)}\norm{\bm{Y}_{k,n}}}}$
    where $(\cdot)^{H}$ is the Hermitian transpose.
    Soft masks are then calculated by $w_{k,n}=\min\limits_j{\phi_{k,n}(r_j)}$.
    Finally, it finds sensor-frame direction for each interval through a grid search: $\bar{r}^s_{n, \text{LSDD}}=\argmin\limits_{r_j}\sum
    \limits_{k,n'\in I_n}w_{k,n'}^{-\alpha}\phi_{k,n'}(r_j)$ in which $\alpha>0$ is a selectivity factor.
    We used a uniform grid of \ang{2} resolution consisting of 13744 directions for both LSDD and SRP-PHAT\@.
    Both of these methods estimate DOAs for wideband sources i.e.\ they use moving windows to estimate one direction for the center time frame.
    We found out SRP-PHAT performed best when time frames were extracted in the frequency range \SIrange{340}{6000}{Hz}.
    For LSDD, we chose the frequency range \SIrange{1800}{3600}{Hz} and selectivity factor $\alpha=3$ as prescribed
    in~\cite{tourbabin2019lsdd} for a similar cubical array.
    An important factor for the performance of DOA estimation methods is the duration of the windows over which they apply the pooling.
    While shorter windows are desired for moving sources, longer ones can improve accuracy as they provide more observations.
    To investigate this trade-off, we evaluated all three methods with moving windows extracted from datasets with fixed $T_{60}$ and
    five different durations, $L_{\text{win}}=0.05$, 0.2, 0.5, \SI{1.0}{s}, and "Full" referring to the case when the full sentence
    (\textasciitilde\SI{3}{s}) was used for the pooling.
    As shown in Table~\ref{table:comparison}, all three methods perform better with longer window lengths as the reverberation in the
    environment increases.
    However with an increase in window lengths, performance of LSDD and SRP-PHAT eventually drops while C-SL consistently performs better
    and better with more observations becoming available in all four conditions.
    This can be explained by the fact that C-SL applies the pooling in the world frame whereas the other two do that in the sensor frame.
    When the motion of the array and that of source are independent (e.g. stationary sources) directions in the world frame vary slower
    thus C-SL benefits better from longer windows.

    \begin{figure}[t]
        \centering
        \subfloat[]{%
            \hspace*{0mm}
            \raisebox{0mm}{\includegraphics[width=7.9cm, valign=t]{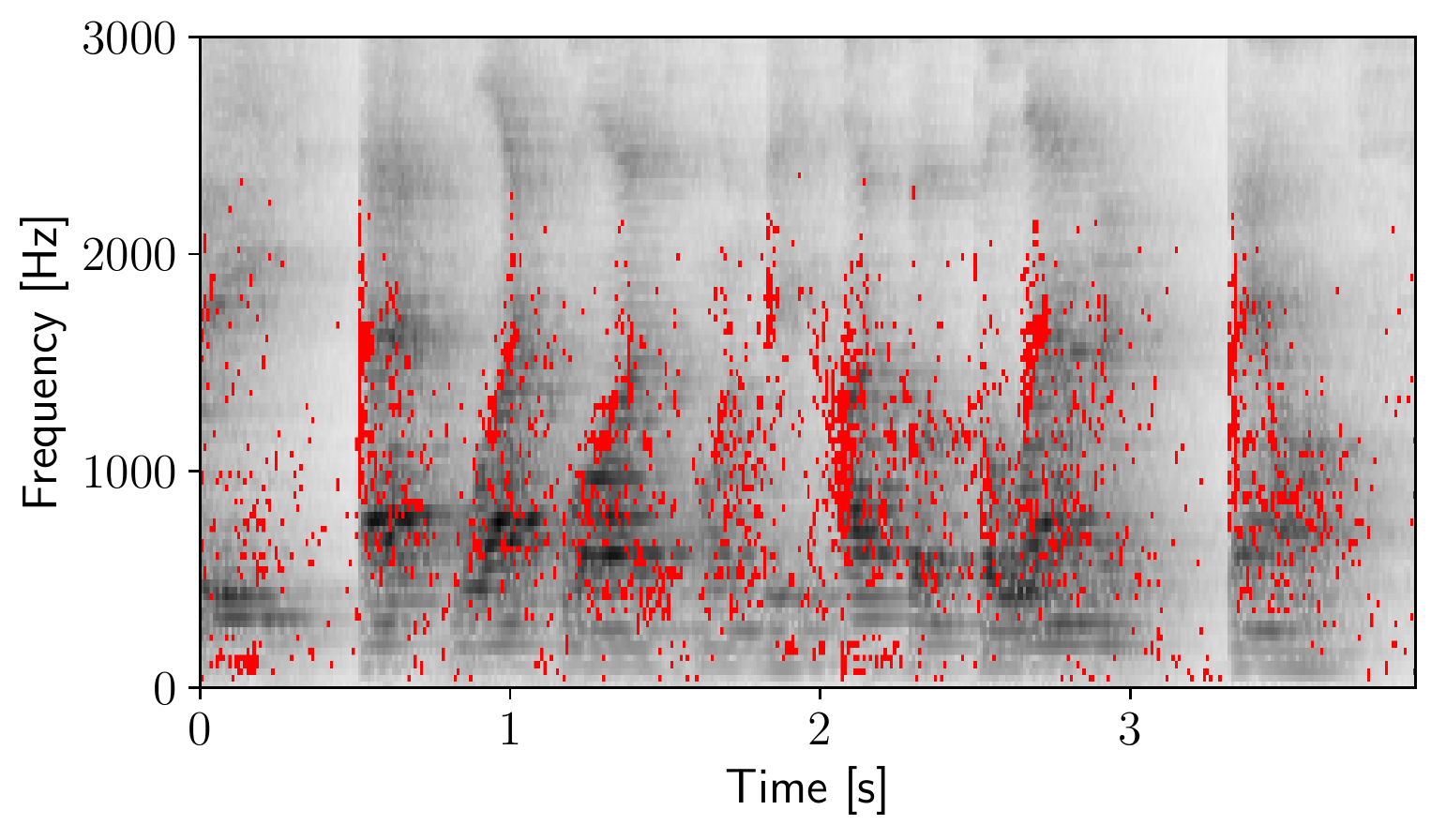}}%
            \label{subfig:scatter}%
        }
        \subfloat[]{%
            \raisebox{0mm}{\includegraphics[width=6cm, valign=t]{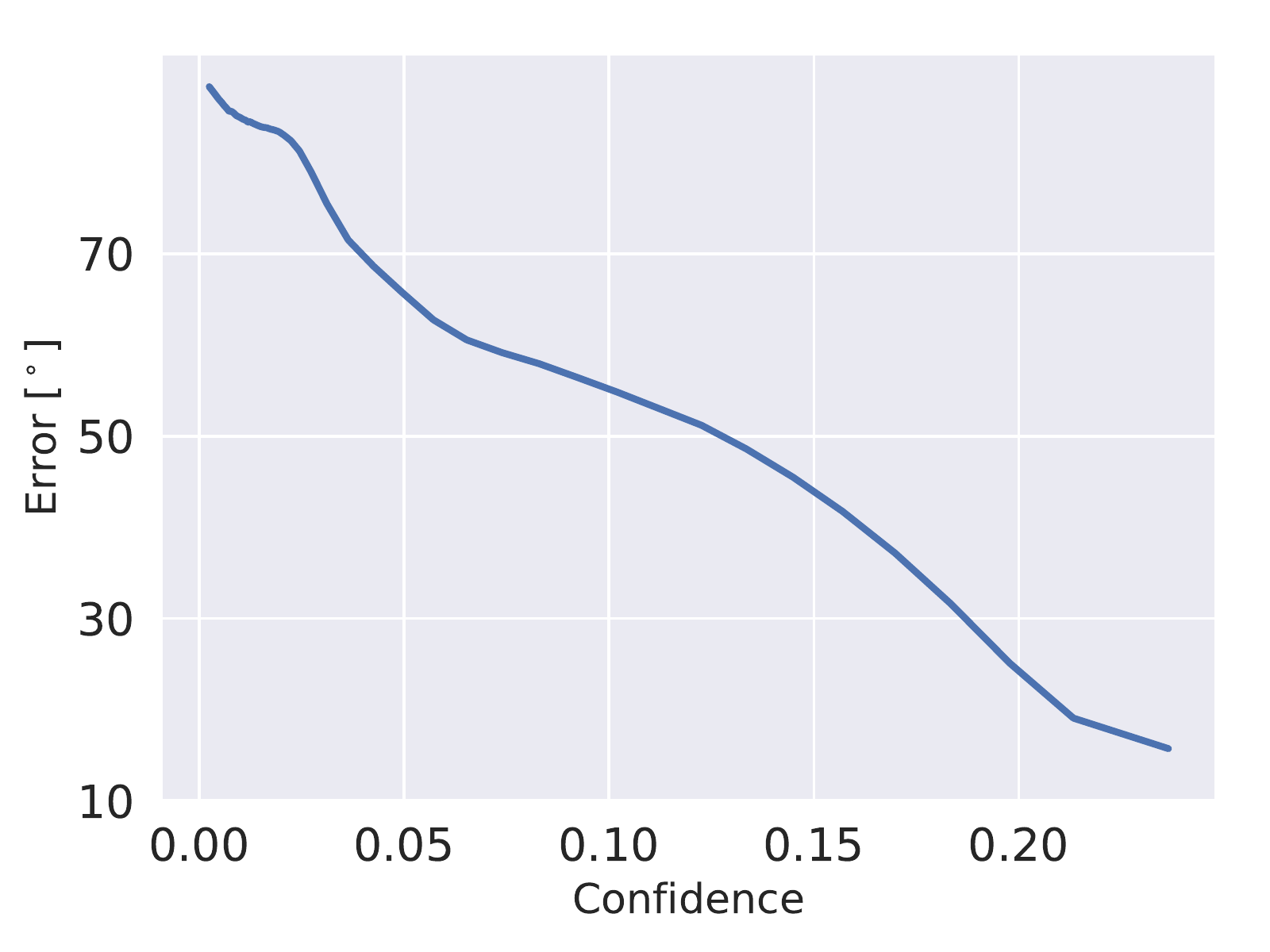}}%
            \label{subfig:err_vs_dom}%
        }
        \caption{Evaluation of confidence weights predicted by C-SL:
            \protect\subref{subfig:scatter} Scatter plot of time-frequency bins (red pixels) with estimated confidence in the
            95-th percentile overlaid on a sample spectrogram. \protect\subref{subfig:err_vs_dom} Sensor-frame direction
            errors $\sigma(r^s, \dblacc\hat{\tilde{r}}^s)$ vs.\ predicted confidence $\norm{\tilde{r}^s}$ estimated by
            C-SL averaged over percentile groups of time-frequency bins in the $C_\text{mixed}$ dataset.}
        \label{fig:dominance}
    \end{figure}%

    In the second experiment, we performed an assessment of the confidence weights predicted by C-SL\@.
    To find these weights, we calculated $\ell_2$ norm of the MLP output for all time-frequency bins extracted from the test dataset in
    $C_\text{mixed}$ condition.
    Figure~\ref{fig:dominance}\subref{subfig:scatter} illustrates the scatter plot of the bins with weights in the 95-percentile for a
    highly reverberated session in this dataset ($T_{60}=\SI{750}{ms}$) overlaid on the session spectrogram.
    As expected, majority of these bins are concentrated around signal onsets at frequencies carrying higher energy.
    We also examined how these confidence scores are related to sensor frame estimation errors calculated by
    $\sigma(r^s, \hat{\tilde{r}}^s)$.
    Figure~\ref{fig:dominance}\subref{subfig:err_vs_dom} shows error curve vs.\ confidence weights found using a quantile-based binning
    of time-frequency bins.
    The monotonic decrease in average errors indicates that the model has successfully learned to predict uncertainty in estimations.

    In our last experiment, we investigated application of C-SL at inference time in a multi-source environment.
    In such conditions, it can be assumed that each time-frequency bin is dominated by one source, therefore finding location of the
    sources can be cast as a clustering task where time-frequency bins are assigned to different clusters based on their predicted world
    frame direction.
    While many approaches could be used for the clustering, we opted for a non-parametric kernel density estimation (KDE) based
    technique to detect dominant directions.
    In this method, given estimated world-frame directions for an ensemble of bins within a window, we first
    approximate the weighted density of directions on a uniform grid (same as the one used in the first experiment) by
    $\psi_n(r_j)=\sum\limits_{k,n'\in I_n}\norm{\tilde{r}^w_{k,n'}}e^{-\sigma(r_j,\hat{\tilde{r}}^w_{k,n'})/\alpha}$
    where we set $\alpha=\ang{1}$ as the bandwidth of the kernel.
    Given the maximum number of sources $N_\text{src}$, we then find local maxima of function $\psi$ on the grid, and pick $N_\text{src}$
    peaks with highest density as the estimates of source direction for the window.

    \begin{wraptable}[25]{r}{7cm}
        \centering
        \begin{tabular}{c}
            \begin{minipage}{0.5\textwidth}
                \centering
                \includegraphics[width=7cm]{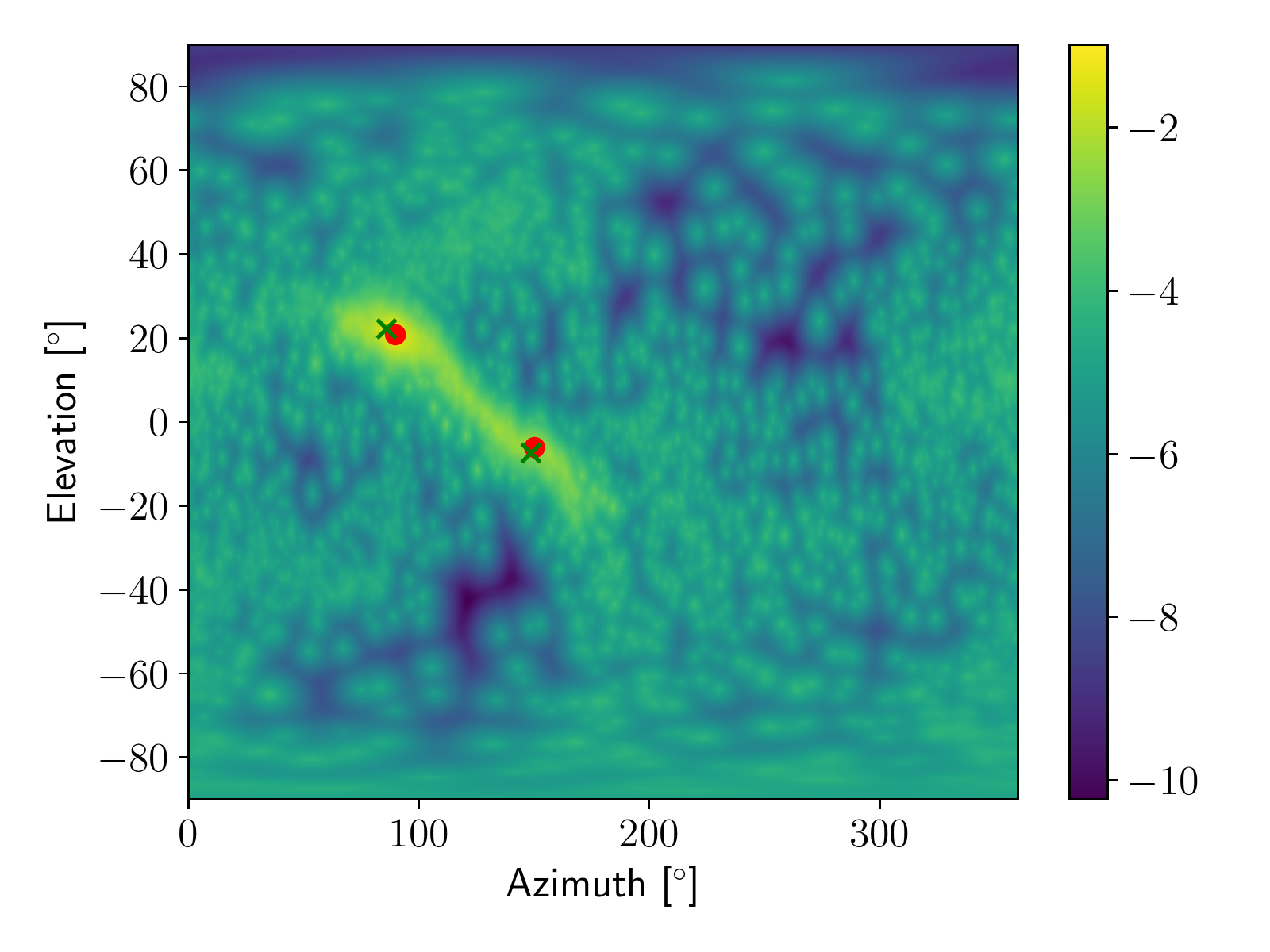}
                \captionsetup{width=.8\linewidth}
                \captionof{figure}{Log-scale kernel density values estimated by C-SL for a sample time window of length \SI{200}{ms}.
                    ($\textcolor{red}{\Bigcdot}$) and ($\textcolor{ForestGreen}{\pmb{\times}}$) depict predicted and ground truth source
                    locations on the grid.}
                \label{fig:density}
            \end{minipage} \\
            \\
            \begin{minipage}{.5\textwidth}
                \captionsetup{width=.8\linewidth}
                \caption{C-SL estimation errors (in degree) in two-speaker scenario computed for different window lengths in
                    anechoic condition.}
                \label{table:2_src}
                \centering
                \begin{tabular}{@{}lcc@{}}
                    \toprule
                    $L_{\text{win}}(\si{s})$ & $d_{\text{w-chamfer}}$ \\
                    \midrule
                    0.05 & $\CI{9.87}{0.11}$ \\
                    0.2 & $\CI{4.26}{0.13}$ \\
                    0.5 & $\CI{2.85}{0.15}$ \\
                    1.0 & $\CI{2.22}{0.15}$ \\
                    \bottomrule
                \end{tabular}
            \end{minipage}
        \end{tabular}
    \end{wraptable}

    For this experiment, we created a new test dataset in anechoic condition consisting of 600 sessions each with two sources.
    To generate each session, two sentences were randomly selected from the test split we generated before
    and spatialized according to random independent locations selected for each of the sources and motions of the array similar to the single
    source condition.
    Spatialized sounds from the two sources were then added together (after padding the shorter one with zeros at the end) to generate
    the recordings at each microphone.
    Finally, time windows extracted from these recordings with at least one source present were used to evaluate the performance of C-SL
    in a two-speaker condition.
    Figure~\ref{fig:density} shows log-scale grid densities calculated for a sample input window and the corresponding predicted and
    ground truth pairs of directions.
    As it can be seen, the identified peaks are very sharp and lie close to ground-truth locations of the sources, a trend that we
    found to be generally true when using \@C-SL\@.
    In order to quantify the error between the predicted set of sensor frame directions and their ground truth values, denoted by
    $\bar{R}^s$ and $R^s$, we used a weighted version of Chamfer distance to match directions in the two sets and measure the deviation
    between them as following:

    \begin{align*}
        d_{\text{w-chamfer}}(\bar{R}^s, R^s)=&\frac{1}{\lvert R^s\rvert}\sum\limits_{r\in R^s}\min\limits_{r'\in \bar{R}^s}\sigma(r,r') \\
        +&\frac{1}{\sum\limits_{r'\in \bar{R}^s}\psi(r')}\sum\limits_{r'\in \bar{R}^s}\psi(r')\min\limits_{r\in R^s}\sigma(r,r')\qquad(8)
        \label{eq:chamfer}
    \end{align*}

    \newpage
    The weighting can be thought of as a mechanism to filter out spurious peaks based on their density without having to choose thresholds.
    We calculated values of this metric for different window durations ranging from \SIrange{50}{1000}{ms}.
    Results, shown in Table~\ref{table:2_src}, demonstrate that in conjunction with the appropriate clustering scheme C-SL can also be
    utilized in multi-source environments.

    \section{Conclusion}\label{sec:conclusion}

    In this paper, we presented Contrastive Sound Localization (C-SL), a framework for learning acoustic-spatial mappings
    from unlabled data collected by microphone arrays of arbitrary geometry.
    C-SL combines contrastive learning and acoustic-inertial sensor fusion to simultaneously calibrate the array and estimate DOAs
    in a self-supervised manner.
    Our evaluations demonstrate that, by leveraging array movements, C-SL can localize sounds in a wide range of conditions with no
    additional information about the array or the sources available
    The relaxed data collection, simplicity and low computational requirements to train the model, together with the encouraging results in
    challenging conditions are advancements offered by C-SL that pave the way toward personalized hearing applications.

    \bibliography{reference.bib}

\begin{thebibliography}{10}

\bibitem{blauert1997spatial}
J.~Blauert, {\em Spatial hearing: the psychophysics of human sound
  localization}.
\newblock MIT press, 1997.

\bibitem{wallach1940role}
H.~Wallach, ``The role of head movements and vestibular and visual cues in
  sound localization.,'' {\em Journal of Experimental Psychology}, vol.~27,
  no.~4, p.~339, 1940.

\bibitem{genzel2016dependence}
D.~Genzel, U.~Firzlaff, L.~Wiegrebe, and P.~R. MacNeilage, ``Dependence of
  auditory spatial updating on vestibular, proprioceptive, and efference copy
  signals,'' {\em Journal of neurophysiology}, vol.~116, no.~2, pp.~765--775,
  2016.

\bibitem{shinn1998adapting}
B.~G. Shinn-Cunningham, N.~I. Durlach, and R.~M. Held, ``Adapting to
  supernormal auditory localization cues. i. bias and resolution,'' {\em The
  Journal of the Acoustical Society of America}, vol.~103, no.~6,
  pp.~3656--3666, 1998.

\bibitem{dibiase2000high}
J.~H. DiBiase, {\em A high-accuracy, low-latency technique for talker
  localization in reverberant environments using microphone arrays}.
\newblock Brown University Providence, RI, 2000.

\bibitem{schmidt1986multiple}
R.~Schmidt, ``Multiple emitter location and signal parameter estimation,'' {\em
  IEEE transactions on antennas and propagation}, vol.~34, no.~3, pp.~276--280,
  1986.

\bibitem{roden2015sound}
R.~Roden, N.~Moritz, S.~Gerlach, S.~Weinzierl, and S.~Goetze, ``On sound source
  localization of speech signals using deep neural networks,'' 2015.

\bibitem{xiao2015learning}
X.~Xiao, S.~Zhao, X.~Zhong, D.~L. Jones, E.~S. Chng, and H.~Li, ``A
  learning-based approach to direction of arrival estimation in noisy and
  reverberant environments,'' in {\em 2015 IEEE International Conference on
  Acoustics, Speech and Signal Processing (ICASSP)}, pp.~2814--2818, IEEE,
  2015.

\bibitem{chakrabarty2017broadband}
S.~Chakrabarty and E.~A. Habets, ``Broadband doa estimation using convolutional
  neural networks trained with noise signals,'' in {\em 2017 IEEE Workshop on
  Applications of Signal Processing to Audio and Acoustics (WASPAA)},
  pp.~136--140, IEEE, 2017.

\bibitem{wang2018robust}
Z.-Q. Wang, X.~Zhang, and D.~Wang, ``Robust speaker localization guided by deep
  learning-based time-frequency masking,'' {\em IEEE/ACM Transactions on Audio,
  Speech, and Language Processing}, vol.~27, no.~1, pp.~178--188, 2018.

\bibitem{adavanne2018direction}
S.~Adavanne, A.~Politis, and T.~Virtanen, ``Direction of arrival estimation for
  multiple sound sources using convolutional recurrent neural network,'' in
  {\em 2018 26th European Signal Processing Conference (EUSIPCO)},
  pp.~1462--1466, IEEE, 2018.

\bibitem{comminiello2019quaternion}
D.~Comminiello, M.~Lella, S.~Scardapane, and A.~Uncini, ``Quaternion
  convolutional neural networks for detection and localization of 3d sound
  events,'' in {\em ICASSP 2019-2019 IEEE International Conference on
  Acoustics, Speech and Signal Processing (ICASSP)}, pp.~8533--8537, IEEE,
  2019.

\bibitem{oord2018representation}
A.~v.~d. Oord, Y.~Li, and O.~Vinyals, ``Representation learning with
  contrastive predictive coding,'' {\em arXiv preprint arXiv:1807.03748}, 2018.

\bibitem{chen2020simple}
T.~Chen, S.~Kornblith, M.~Norouzi, and G.~Hinton, ``A simple framework for
  contrastive learning of visual representations,'' {\em arXiv preprint
  arXiv:2002.05709}, 2020.

\bibitem{tian2019contrastive}
Y.~Tian, D.~Krishnan, and P.~Isola, ``Contrastive representation
  distillation,'' {\em arXiv preprint arXiv:1910.10699}, 2019.

\bibitem{sermanet2018time}
P.~Sermanet, C.~Lynch, Y.~Chebotar, J.~Hsu, E.~Jang, S.~Schaal, S.~Levine, and
  G.~Brain, ``Time-contrastive networks: Self-supervised learning from video,''
  in {\em 2018 IEEE International Conference on Robotics and Automation
  (ICRA)}, pp.~1134--1141, IEEE, 2018.

\bibitem{pirk2019online}
S.~Pirk, M.~Khansari, Y.~Bai, C.~Lynch, and P.~Sermanet, ``Online object
  representations with contrastive learning,'' {\em arXiv preprint
  arXiv:1906.04312}, 2019.

\bibitem{zhao2018sound}
H.~Zhao, C.~Gan, A.~Rouditchenko, C.~Vondrick, J.~McDermott, and A.~Torralba,
  ``The sound of pixels,'' in {\em Proceedings of the European Conference on
  Computer Vision (ECCV)}, pp.~570--586, 2018.

\bibitem{owens2018audio}
A.~Owens and A.~A. Efros, ``Audio-visual scene analysis with self-supervised
  multisensory features,'' in {\em Proceedings of the European Conference on
  Computer Vision (ECCV)}, pp.~631--648, 2018.

\bibitem{gan2019self}
C.~Gan, H.~Zhao, P.~Chen, D.~Cox, and A.~Torralba, ``Self-supervised moving
  vehicle tracking with stereo sound,'' in {\em Proceedings of the IEEE
  International Conference on Computer Vision}, pp.~7053--7062, 2019.

\bibitem{liu2019self}
H.~Liu, Z.~Zhang, Y.~Zhu, and S.-C. Zhu, ``Self-supervised incremental learning
  for sound source localization in complex indoor environment,'' in {\em 2019
  International Conference on Robotics and Automation (ICRA)}, pp.~2599--2605,
  IEEE, 2019.

\bibitem{kumar2011predictive}
S.~Kumar, W.~Sedley, K.~V. Nourski, H.~Kawasaki, H.~Oya, R.~D. Patterson, M.~A.
  Howard~III, K.~J. Friston, and T.~D. Griffiths, ``Predictive coding and pitch
  processing in the auditory cortex,'' {\em Journal of Cognitive Neuroscience},
  vol.~23, no.~10, pp.~3084--3094, 2011.

\bibitem{almalioglu2019selfvio}
Y.~Almalioglu, M.~Turan, A.~E. Sari, M.~R.~U. Saputra, P.~P. de~Gusm{\~a}o,
  A.~Markham, and N.~Trigoni, ``Selfvio: Self-supervised deep monocular
  visual-inertial odometry and depth estimation,'' {\em arXiv preprint
  arXiv:1911.09968}, 2019.

\bibitem{qin2017robust}
T.~{Qin}, P.~{Li}, and S.~{Shen}, ``Vins-mono: A robust and versatile monocular
  visual-inertial state estimator,'' {\em IEEE Transactions on Robotics},
  vol.~34, no.~4, pp.~1004--1020, 2018.

\bibitem{mur2017visual}
R.~Mur-Artal and J.~D. Tard{\'o}s, ``Visual-inertial monocular slam with map
  reuse,'' {\em IEEE Robotics and Automation Letters}, vol.~2, no.~2,
  pp.~796--803, 2017.

\bibitem{madgwick2010efficient}
S.~Madgwick, ``An efficient orientation filter for inertial and
  inertial/magnetic sensor arrays,'' {\em Report x-io and University of Bristol
  (UK)}, vol.~25, pp.~113--118, 2010.

\bibitem{kuipers1999quaternions}
J.~B. Kuipers {\em et~al.}, {\em Quaternions and rotation sequences}, vol.~66.
\newblock Princeton university press Princeton, 1999.

\bibitem{wu2018multisource}
K.~Wu, V.~G. Reju, and A.~W. Khong, ``Multisource doa estimation in a
  reverberant environment using a single acoustic vector sensor,'' {\em
  IEEE/ACM Transactions on Audio, Speech, and Language Processing}, vol.~26,
  no.~10, pp.~1848--1859, 2018.

\bibitem{salimans2016weight}
T.~Salimans and D.~P. Kingma, ``Weight normalization: A simple
  reparameterization to accelerate training of deep neural networks,'' in {\em
  Advances in neural information processing systems}, pp.~901--909, 2016.

\bibitem{diaz2018gpurir}
D.~Diaz-Guerra, A.~Miguel, and J.~R. Beltran, ``gpurir: A python library for
  room impulse response simulation with gpu acceleration,'' {\em arXiv preprint
  arXiv:1810.11359}, 2018.

\bibitem{allen1979image}
J.~B. Allen and D.~A. Berkley, ``Image method for efficiently simulating
  small-room acoustics,'' {\em The Journal of the Acoustical Society of
  America}, vol.~65, no.~4, pp.~943--950, 1979.

\bibitem{garofolo1993darpa}
J.~S. Garofolo, L.~F. Lamel, W.~M. Fisher, J.~G. Fiscus, D.~S. Pallett, and
  N.~L. Dahlgren, ``Darpa timit acoustic phonetic continuous speech corpus
  cdrom,'' 1993.

\bibitem{kingma2014adam}
D.~P. Kingma and J.~Ba, ``Adam: A method for stochastic optimization,'' {\em
  arXiv preprint arXiv:1412.6980}, 2014.

\bibitem{tourbabin2019lsdd}
V.~Tourbabin, J.~Donley, B.~Rafaely, and R.~Mehra, ``Direction of arrival
  estimation in highly reverberant environments using soft time-frequency
  mask,'' in {\em 2019 IEEE Workshop on Applications of Signal Processing to
  Audio and Acoustics (WASPAA)}, pp.~383--387, IEEE, 2019.

\end{thebibliography}

    \newpage
    \section*{Appendix}\label{sec:app}
    \setcounter{equation}{0}
    \begin{theorem*}
    Given bijective functions $g,f\colon A\rightarrow\mathbb{S}^2$ defined on the non-empty set $A$ and the constraint $C$:
    $\forall x,y\in A, \forall R_x,R_y\in SO(3)\colon R_xf(x)=R_yf(y)\iff R_xg(x)=R_yg(y)$, $C$ holds if and only if $f=\pm g$.
    \end{theorem*}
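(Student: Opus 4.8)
The plan is to split into the two implications. The \emph{if} direction is immediate: if $f\equiv g$ then $C$ is a tautology, and if $f\equiv -g$ then for all $x,y\in A$ and $R_x,R_y\in SO(3)$ we have $R_xf(x)=R_yf(y)\iff -R_xg(x)=-R_yg(y)\iff R_xg(x)=R_yg(y)$, so $C$ holds. The real work is the \emph{only if} direction, and I would prove it by reducing $C$ to a statement about cosets of point stabilizers in $SO(3)$ and then invoking a small geometric sub-lemma.

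The first step is a quantifier reduction. Since $(R_x,R_y)\mapsto R_x^{-1}R_y$ is surjective onto $SO(3)$, the two rotations in $C$ can be collapsed into one: $C$ is equivalent to the assertion that for every $x,y\in A$ and every $S\in SO(3)$, $Sf(x)=f(y)\iff Sg(x)=g(y)$; equivalently, for every pair $(x,y)$ the sets $\{S\in SO(3):Sf(x)=f(y)\}$ and $\{S\in SO(3):Sg(x)=g(y)\}$ coincide. Both are nonempty because $SO(3)$ acts transitively on $\mathbb{S}^2$.

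The second step, which I expect to be the only place any real content enters, is the sub-lemma: for unit vectors $u,v,p,q$, if $\{S:Su=v\}=\{S:Sp=q\}$ then either $u=p,\ v=q$ or $u=-p,\ v=-q$ (the converse being clear). To prove it, pick $S_0$ in the common set; then $Su=v\iff S_0^{-1}S\in\mathrm{Stab}(u)$, so $\{S:Su=v\}=S_0\,\mathrm{Stab}(u)$ and likewise $\{S:Sp=q\}=S_0\,\mathrm{Stab}(p)$, whence the hypothesis forces $\mathrm{Stab}(u)=\mathrm{Stab}(p)$. Since the stabilizer of a unit vector in $SO(3)$ is precisely the group of rotations about the line it spans, equality of stabilizers forces $u=\pm p$; then $S_0u=v$ and $S_0p=q$ pin down $v=q$ in the first case and $v=-q$ in the second. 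The delicate bookkeeping is the coset identity $\{S:Su=v\}=S_0\,\mathrm{Stab}(u)$ and the implication $\mathrm{Stab}(u)=\mathrm{Stab}(p)\Rightarrow u=\pm p$; everything surrounding it is routine.

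The last step is assembly. Applying the sub-lemma with $(u,v,p,q)=(f(x),f(y),g(x),g(y))$ for each pair $(x,y)$ shows that for every pair either $f$ and $g$ agree at both points or $f=-g$ at both points; in particular $f(x)=\varepsilon(x)\,g(x)$ for a well-defined sign $\varepsilon(x)\in\{\pm1\}$ (well-defined since $g(x)\neq0$), and the pairwise dichotomy rules out $\varepsilon(x)\neq\varepsilon(y)$ for any $x,y$. Hence $\varepsilon$ is constant on the nonempty set $A$ (which is moreover infinite, being in bijection with $\mathbb{S}^2$, so there is no degenerate case to treat), and therefore $f=g$ or $f=-g$, as claimed.
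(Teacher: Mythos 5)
Your proof is correct, and after the shared opening move it diverges genuinely from the paper's. Both arguments begin by collapsing the two rotations into one, reducing $C$ to the statement that the solution sets $\{S\in SO(3):Sf(x)=f(y)\}$ and $\{S\in SO(3):Sg(x)=g(y)\}$ coincide for every pair $(x,y)$. From there the paper extracts the \emph{angle}: it observes that $\langle u,v\rangle=\max_{R:Ru=v}\cos\theta(R)$, so equal solution sets give $\langle f(x),f(y)\rangle=\langle g(x),g(y)\rangle$; it then uses surjectivity of $g$ to pick preimages of the standard basis and build an explicit orthogonal matrix $P$ with $g=Pf$, and finally shows $P$ commutes with all of $SO(3)$, forcing $P=\pm I$. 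You instead extract the \emph{axis}: the solution set is the left coset $S_0\,\mathrm{Stab}(u)$, equal cosets with a common representative force equal stabilizers, and equal stabilizers in $SO(3)$ force $u=\pm p$; the pairwise dichotomy ($v=q$ exactly when $u=p$) then makes the sign $\varepsilon(x)$ globally constant. Your route is more elementary (no Gram-matrix linearization, no appeal to the commutant of $SO(3)$) and strictly more general, since it never uses bijectivity of $f$ or $g$ --- only that $A$ is nonempty --- whereas the paper's construction of $P$ needs $\uvec{i},\uvec{j},\uvec{k}$ to lie in the image of $g$. What the paper's version buys in exchange is the intermediate structural fact $g=Pf$ with $P$ orthogonal, which your pointwise argument does not produce along the way. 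All the delicate steps you flag (the coset identity, and $\mathrm{Stab}(u)=\mathrm{Stab}(p)\Rightarrow u=\pm p$ via uniqueness of the axis of a nontrivial rotation) check out.
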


    \begin{proof}
     It is trivial to show $C$ holds when $f=\pm g$.\\
     Now let $\theta(R)$ denote the rotation angle corresponding to rotation matrix $R$.
     It can be shown that:

     \begin{equation}
         \forall u,v\in \mathbb{S}^2 \colon \langle u,v\rangle=\max\limits_{R\in SO(3)\colon Ru=v} \cos{(\theta(R))}
         \label{eq:dot}
     \end{equation}

     Using this we can show that if $C$ holds:

     \begin{equation}
         \forall x,y\in A: \langle f(x),f(y)\rangle=\langle g(x),g(y)\rangle
         \label{eq:iso}\\
     \end{equation}

     By setting values of $y$ in~\eqref{eq:iso} to $a_1=g^{-1}(\uvec{i})$, $a_2=g^{-1}(\uvec{j})$, and $a_3=g^{-1}(\uvec{k})$
     we get:

     \begin{equation}
         g=Pf
         \label{eq:lin}
     \end{equation}

     where

     \begin{equation}
        P=\left[\begin{array}{@{}c|c|c@{}}
        f(a_1) & f(a_2) & f(a_3)
        \end{array}\right]^T
        \label{eq:P}
     \end{equation}

     Furthermore since~\eqref{eq:lin} also holds for $a_1$, $a_2$ and $a_3$ it can be shown that:

     \begin{equation}
         P^{T}P=PP^T=I
         \label{eq:ortho}
     \end{equation}

     By plugging~\eqref{eq:lin} into $C$ and setting $R=R_y^{-1}R_x$, we will get:

     \begin{equation}
         \forall x,y\in A, \forall R\in SO(3): Rf(x)=f(y)\iff RPf(x)=Pf(y)
     \end{equation}

     which is equivalent to

     \begin{equation}
         \forall R\in SO(3): RP=PR
         \label{eq:transd}
     \end{equation}

     ~\eqref{eq:ortho} and~\eqref{eq:transd} imply that $P=\pm I$.
     Therefore $f=\pm g$.
    \end{proof}

\end{document}